\theoremstyle:=definition,remark,plain\do{%
        \expandafter\g@addto@macro\csname th@\theoremstyle\endcsname{%
            \addtolength\thm@preskip\parskip
            }%
        }
\pgfplotsset{compat=1.10}
\numberwithin{equation}{section}
\numberwithin{table}{section}
\numberwithin{figure}{section}
\newtheorem{theorem}{Theorem}[section]
\crefname{theorem}{theorem}{theorems}
\Crefname{theorem}{Theorem}{Theorems}
\crefname{prop}{proposition}{propositions}
\Crefname{prop}{Proposition}{Propositions}
\crefname{lemma}{lemma}{lemmas}
\Crefname{lemma}{Lemma}{Lemmas}
\newtheorem{corollary}[theorem]{Corollary}
\crefname{corollary}{corollary}{corollaries}
\Crefname{corollary}{Corollary}{Corollaries}
\theoremstyle{definition}
\newtheorem{remark}[theorem]{Remark}
\crefname{remark}{remark}{remarks}
\Crefname{remark}{Remark}{Remarks}
\theoremstyle{definition}
\crefname{defn}{definition}{definitions}
\Crefname{defn}{Definition}{Definitions}
\crefname{example}{example}{example}
\Crefname{example}{Example}{Examples}
\DeclareMathOperator*{\spann}{span}
\DeclareMathOperator{\id}{Id}
\newcommand{\bff}{\boldsymbol{f}}
\newcommand{\bfg}{\boldsymbol{g}}
\newcommand{\system}{\boldsymbol{\Sigma}}
\newcommand{\stateSpace}{\mathcal{Z}}
\newcommand{\state}{\boldsymbol{z}}
\newcommand{\stateNorm}[1]{\|#1\|_{\stateSpace}}
\newcommand{\inpSpace}{\mathcal{W}}
\newcommand{\inpSet}{\mathbb{W}}
\newcommand{\inpVar}{\boldsymbol{w}}
\newcommand{\inpNorm}[1]{{\|#1\|_{\inpSpace}}}
\newcommand{\inpProd}[2]{\left\langle #1,#2\right\rangle_{\inpSpace}}
\newcommand{\inpSpaceRed}{\inpSpace_n}
\newcommand{\inpSpaceRedMin}{\smash{\widehat{\inpSpace}_n}}
\newcommand{\outSpace}{\mathcal{Y}}
\newcommand{\outVar}{\boldsymbol{y}}
\newcommand{\redOutVar}{\outVar_n}
\newcommand{\outNorm}[1]{\left\|#1\right\|_{\outSpace}}
\newcommand{\outNormSmashed}[1]{\|#1\|_{\outSpace}}
\newcommand{\outProd}[2]{{\left\langle#1,#2\right\rangle_{\outSpace}}}
\newcommand{\outSpaceRed}{\outSpace_n}
\newcommand{\outSpaceRedMin}{\widehat{\outSpace}_n}
\newcommand{\stateDim}{N}
\newcommand{\controlDim}{m}
\newcommand{\outputDim}{p}
\newcommand{\param}{\boldsymbol{p}}
\newcommand{\control}{\boldsymbol{u}}
\newcommand{\paramSpace}{\mathcal{P}}
\newcommand{\paramSet}{\mathbb{P}}
\newcommand{\controlSpace}{\mathcal{U}}
\newcommand{\controlSet}{\mathbb{U}}
\newcommand{\controlNorm}[1]{\|#1\|_{\controlSpace}}
\newcommand{\paramNorm}[1]{\|#1\|_{\paramSpace}}
\newcommand{\solOperator}{\mathcal{S}}
\newcommand{\romOperator}{\mathcal{S}_n}
\newcommand{\hankelOperator}{\mathcal{H}}
\newcommand{\reachGramian}{\mathcal{P}}
\newcommand{\observGramian}{\mathcal{Q}}
\def\Lt{{L^2}}
\newcommand{\activeSubspaceDistance}[1]{d_{\mathrm{A}}^{#1}}
\newcommand*{\rechterWinkel}[3]{
\draw[shift={(#2:#3)}] (#1) arc[start angle=#2, delta angle=90, radius = #3];
\fill[shift={(#2+45:#3/2)}] (#1) circle[radius=1.25\pgflinewidth];
}
\newcommand{\mail}[1]{\href{mailto:#1}{#1}}
\title{Kolmogorov $n$-widths for linear dynamical systems}
\author{Benjamin Unger\thanks{Institut für Mathematik, Technische Universität Berlin, Straße des 17. Juni 136, 10623 Berlin, Germany, \mail{unger@math.tu-berlin.de}. The work of this author was funded by the DFG Collaborative Research Center 910 \emph{Control of self-organizing nonlinear systems: Theoretical methods and concepts of application}.} \and Serkan Gugercin\thanks{Department of Mathematics, Virginia Polytechnic Institute and State University, Blacksburg, VA 24061, USA, \mail{gugercin@vt.edu}. The work of this author was supported in parts by NSF through Grant DMS-1720257 and by the Alexander von Humboldt Foundation.}}
\begin{document}

\maketitle

\begin{abstract}
	Kolmogorov $n$-widths and Hankel singular values are two commonly used concepts in model reduction. Here we show that for the special case of linear time-invariant dynamical (LTI) systems, these two concepts are directly connected. More specifically, 
the greedy search applied to the Hankel operator of an LTI system resembles the minimizing subspace for the Kolmogorov n-width and	the  Kolmogorov $n$-width of an LTI system equals its  $(n+1)st$ Hankel singular value once the subspaces are appropriately defined. We also establish a lower bound for the Kolmorogov $n$-width for parametric LTI systems and illustrate that the method of active subspaces can be viewed as the dual concept to the minimizing subspace for the Kolmogorov $n$-width.
\end{abstract}
\noindent
{\bf Keywords:} model reduction, Hankel singular values, Kolmogorov $n$-width, Hankel operator, reduced basis method, active subspaces
\vskip .3truecm
\noindent
{\bf AMS(MOS) subject classification:} 37M99, 47B35, 65P99, 34A45, 35A35, 93C05

\renewcommand{\thefootnote}{\fnsymbol{footnote}}

\renewcommand{\thefootnote}{\arabic{footnote}}

\section{Introduction}
\label{sec:introduction}
The model reduction research has made great progress over the last two decades with major developments in many aspects ranging from linear to nonlinear and to parametric models, to data-driven model reduction and more. The resulting theory and algorithms were successfully applied to various applications, ranging from inverse problems to shape optimization to uncertainty quantification.  We refer the reader to the recent surveys and books \cite{BenCOW17,BauBF14,QuaMN16,Ant05,benner2015survey,hesthaven2016certified,chiHRW16} for further details.

Due to its wide range of applications, the model reduction research is carried out by a diverse community, at times different groups using their own tools and language to describe similar mathematical quantities. 

The Hankel singular values, heavily used in the systems and control theory community, and the Kolmogorov $n$-widths, heavily used in the reduced basis community, are two fundamental concepts  in model reduction. 
A connection between the two has been pointed out in \cite{Djo08,Djo10}. Even though in the earlier work \cite{Djo08}  the subspaces were properly identified,  the subspace assumptions in the latter paper \cite{Djo10}  seem to lead to a contradicting conclusion 
(see \Cref{rem:wrongStatement}); thus requiring further inspection. Therefore, we present 
the Hankel singular values and the Kolmogorov $n$-width connection in a self-contained manner, by detailing the underlying subspaces and by using a different proof. 
Further contributions are the following:
\begin{itemize}
	\item We show that optimal Hankel norm approximation yields a reduced system that is optimal in the sense that it attains the Kolmogorov $n$-width (\Cref{cor:HankelNormApprox}).
	\item \Cref{thm:activeSubspace} and \Cref{rem:activeSubspace} illustrate that for linear time invariant systems the method of \emph{active subspaces} \cite{Con15} can be understood as the dual concept to the minimizing subspace for the Kolmogorov $n$-width.
	\item We give a lower bound for the Kolmorogov $n$-width for parametric linear time invariant system in \Cref{thm:KolmogorovParametric}.
\end{itemize}

\section{Problem Setting}
\label{sec:problemSetting}

 Given a closed subset $\inpSet$ of a Hilbert space $(\inpSpace,\inpProd{\cdot}{\cdot})$, called the \emph{set of admissible inputs} and the \emph{input space}, respectively, and a further Hilbert space $(\outSpace,\outProd{\cdot}{\cdot})$, called the \emph{output space}, the (physical) system under investigation is described by an operator  $\solOperator$
\begin{equation}
	\label{eq:solOperator}
	\solOperator\colon\inpSet\to\outSpace,\qquad \inpVar\mapsto\outVar = \solOperator(\inpVar),
\end{equation}
where $\inpVar\in\inpSet\subseteq\inpSpace$ and $\outVar\in\outSpace$ denote, respectively, the \emph{inputs} and $\emph{outputs}$ of the operator $\solOperator$.
Our standing assumption is that we are interested in evaluating $\solOperator$ for many input values $\inpVar$ and that the evaluation of $\solOperator(\inpVar)$ is (computationally) demanding. Therefore, we would like to approximate $\solOperator$ with a surrogate operator 
\begin{equation}
	\label{eq:romSolOperator}
	\romOperator\colon\inpSet\to\outSpace, \qquad \inpVar\mapsto \redOutVar = \romOperator(\inpVar),
\end{equation}
such that $\outNorm{\outVar-\redOutVar} = \outNorm{(\solOperator-\romOperator)(\inpVar)}$ is small for all $\inpVar\in\inpSet$ and
that evaluations of $\romOperator$ are computationally cheaper. Hereby, $\outNorm{\cdot}$ denotes the norm induced by the inner product $\outProd{\cdot}{\cdot}$. In many applications the operator $\solOperator$ in \eqref{eq:solOperator} is given implicitly in terms of a (partial) differential-algebraic equation of the form
\begin{equation}
	\label{eq:genDynSys}
	\left\{\begin{aligned}
		0 &= F(t,\state(t),\dot{\state}(t),\param,\control(t)),\\
		\outVar(t) &= g(t,\state(t),\param,\control(t)),\\
		\state(t_0) &= \state_0,
	\end{aligned}\right.
\end{equation}
where the \emph{state variable} $\state$ is, for all $t$, an element of some Banach space $(\stateSpace,\stateNorm{\cdot})$ -- called the \emph{state space} -- and we use the convention $\dot{\state} \vcentcolon= \mathrm{d}/\mathrm{d}t\,\state$. The variable $\state_0\in\stateSpace$ is called \emph{initial condition} and the input space $\inpSpace$ is separated into a (time-independent) parameter space $\paramSpace$ and a control space $\controlSpace$ (i.\,e. $\inpSpace = \paramSpace\times\controlSpace$). Notice that in some cases the state $\state$ itself is of interest, in which case one can use as output function the identity on the state space, i.\,e. 
\begin{displaymath}
	g:\mathbb{R}\times\stateSpace\times\paramSpace\times\controlSpace\to\stateSpace,\qquad (t,\state,\param,\control) \mapsto \state.
\end{displaymath}
A special case of \eqref{eq:genDynSys} is \emph{parametric linear time-invariant} (pLTI) dynamical systems of the form
\begin{equation}
	\label{eq:pLTI}
	\system(\param)\colon\qquad \left\{
	\begin{aligned}
		\dot{\state}(t) &= A(\param)\state(t) + B(\param)\control(t),\\
		\outVar(t) &= C(\param)\state(t) + D(\param)\control(t),\\
		\state(0) &= \state_0,
	\end{aligned}
	\right.
\end{equation}
where $A\colon\paramSpace\to\mathbb{R}^{\stateDim\times\stateDim}$, $B\colon\paramSpace\to\mathbb{R}^{\stateDim\times\controlDim}$, $C\colon\paramSpace\to\mathbb{R}^{\outputDim\times\stateDim}$, and $D\colon\paramSpace\to\mathbb{R}^{\outputDim\times\controlDim}$ are smooth functions. In principle, $A(\param),B(\param),C(\param)$, and $D(\param)$ could  be operators on infinite dimensional spaces. In practice, however, model reduction usually starts with a  finite dimensional, albeit large-scale state-space, which is usually obtained by a semi-discretization of the infinite dimensional space.  The resulting large-scale finite-dimensional system \eqref{eq:pLTI} is often referred to as the \emph{truth} model. We assume that the truth model is accurate enough and that
its approximation error is negligible with respect to the model reduction error to follow. If $A,B,C,D$ in \eqref{eq:pLTI} are constant, i.e., independent of $\param$, we call \eqref{eq:pLTI} a \emph{linear time-invariant} (LTI) system.

A unifying feature of many model reduction schemes is that they can be formulated in the projection framework. Hence the construction of $\romOperator$ is mainly based on identifying a smaller linear subspace $\outSpaceRed\subseteq \outSpace$, that is a good approximation to $\solOperator(\inpSet)$. To simplify the notation, we write $\outSpaceRed\leq\outSpace$ to denote that $\outSpaceRed$ is a subspace of $\outSpace$. The subspace
$\outSpaceRed$  is called the (parametrized) \emph{manifolds of solutions} \cite{LasMQR14}. Subsequently, the operator $\solOperator$ is projected onto this linear subspace. A natural question to ask is what the \emph{best/optimal} subspace of a given dimension is, where the optimality is quantified by means of the minimal worst-case approximation error. Mathematically, this is described by the notion of so-called \emph{Kolmogorov $n$-widths} \cite{Kol36}, denoted by $d_n(\solOperator(\inpSet))$:
\begin{displaymath}
	d_n(\solOperator(\inpSet)) \vcentcolon= \inf_{\substack{\outSpaceRed\leq \outSpace\\\dim(\outSpaceRed)\leq n}} d(\outSpaceRed, \solOperator(\inpSet)),
\end{displaymath}
where $d(\outSpaceRed, \solOperator(\inpSet))$ is the largest distance between any point in $\solOperator(\inpSet)$ and the subspace $\outSpaceRed$, defined as
\begin{displaymath}
	 d(\outSpaceRed, \solOperator(\inpSet)) \vcentcolon= \sup_{\outVar\in\solOperator(\inpSet)} \inf_{\redOutVar \in \outSpaceRed} \outNorm{\outVar-\redOutVar}.
\end{displaymath}
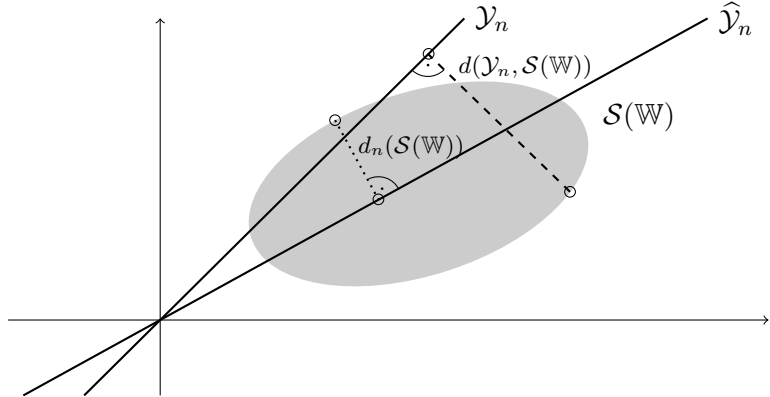
\begin{figure}[ht]
	\centering
	\begin{tikzpicture}
		\draw[->] (0,1) -- (10,1); 
		\draw[->] (2,0) -- (2,5); 
		
		\draw[thick,rotate=18,fill=gray!40,gray!40] (6,1) ellipse (2.3 and 1.2);
		\node at (8.3,3.7) {$\solOperator(\inpSet)$};
		
		\draw[thick] (1,0) -- (6,5) node[anchor=west] {$\outSpaceRed$};
		\draw[dashed, thick] (7.39,2.7) -- (5.52,4.53);
		\rechterWinkel{5.52,4.53}{226}{0.3}
		\draw (7.39,2.7) circle (2pt);
		\draw (5.53,4.53) circle (2pt);

		\draw[thick] (0.2,0) -- (9.2,5) node[anchor=west] {$\outSpaceRedMin$};
		\draw[dotted, thick] (4.3,3.65) -- (4.87,2.6);
		\rechterWinkel{4.87,2.6}{26}{0.3}
		\draw (4.3,3.65) circle (2pt);
		\draw (4.87,2.6) circle (2pt);	
		
		\node at (5.3,3.3) {\footnotesize $d_n(\solOperator(\inpSet))$};
		\node at (6.8,4.35) {\footnotesize $d(\outSpaceRed, \solOperator(\inpSet))$};
	\end{tikzpicture}
	\caption{Shematic illustration of the Kolmogorov $n$-width}
	\label{fig:kolmogorov}
\end{figure}%
The situation is illustrated in \Cref{fig:kolmogorov}. The length of the dashed line represents the distance from $\outSpaceRed$ to $\solOperator(\inpSet)$. Minimizing that distance over all supspaces results in the dotted line, whose length is the Kolmogorov $n$-width. Notice that we have used orthogonal projections onto $\outSpaceRed$, a technique that we use later in our derivations as well. Indeed, in a Hilbert space setting the Kolmogorov $n$-width can be equivalently formulated via linear projectors and there always exists a \emph{minimizing subspace} $\outSpaceRedMin$, i.\,e., we have
\begin{displaymath}
	d_n(\solOperator(\inpSet)) = d(\outSpaceRedMin, \solOperator(\inpSet)).
\end{displaymath} 
This result and relations to further $n$-widths are presented in the monograph \cite{Pin85}. In addition, for special classes
of problems, one can show that the Kolmogorov $n$-widths decay exponentially \cite{MadPT02,MadPT02b,QuaMN16}, thus enabling model reduction  to succeed.

The question at hand is how to construct spaces $\outSpaceRed$ such that $d_n(\solOperator(\inpSet)) \approx d(\outSpaceRed, \solOperator(\inpSet))$. A standard approach employed in the reduced basis community is the greedy construction \cite{PruRVMPT2002}, which iteratively enlarges the space $\outSpaceRed$ such that the worst approximation error is minimized. More precisely, let 
\begin{displaymath}
	e(\solOperator,\outSpaceRed,\inpSet) \vcentcolon= \sup_{\inpVar\in\inpSet} \outNorm{(\solOperator - \pi_{\outSpaceRed}\circ \solOperator)(\inpVar)}
\end{displaymath}
denote the worst approximation error for the projection of $\solOperator$ onto $\outSpaceRed$ where $\pi_{\outSpaceRed}$ denotes the orthogonal projection onto $\outSpaceRed$. We call a sequence $(\outVar_1,\ldots,\outVar_n)\in\outSpace^n$ a \emph{greedy sequence} if it satisfies
\begin{equation}
	\label{eq:greedySequence}
	\inf_{\boldsymbol{\phi}_i\in\outSpace} e(\solOperator,\spann\{\outVar_1,\ldots,\outVar_{i-1},\boldsymbol{\phi}_i\},\inpSet) = e(\solOperator,\spann\{\outVar_1,\ldots,\outVar_i\},\inpSet),
\end{equation}
for $i=1,\ldots,n$. In a practical implementation, the true error $e$ must be replaced by a cheap-to-evaluate error estimator. Morever $\inpSet$ is replaced by a discrete sampling of $\inpSet$ \cite{QuaRM11,QuaMN16} or the greedy search can be formulated as a sequence of adaptive model-constrained
optimization problems \cite{BuiThanh2008}. It is clear that such a (weak) greedy search might not be optimal. 
However, for special cases it is proven that the error for the subspace generated spanned by a greedy sequence converges asymptotically to the Kolmogorov $n$-widths \cite{Binev_etal2011}. 
One of our main results here, mainly  \Cref{thm:KolmogorovHankelGreedy}, illustrates that the greedy search applied to the Hankel operator for an LTI system resembles the minimizing subspace for the Kolmogorov $n$-width.

\section{Linear Time-Invariant Dynamical Systems}
When the matrix functions in \eqref{eq:pLTI} are constant over the parameter space or if we are only interested in controlling system \eqref{eq:pLTI} for a given parameter, the underlying dynamics simplifies to the LTI system
\begin{equation}
	\label{eq:LTI}
	\system: \qquad \dot{\state}(t) = A\state(t) + B\control(t),\qquad \outVar = C\state(t) + D\control(t),
\end{equation}
with $A\in\mathbb{R}^{\stateDim\times\stateDim}$, $B\in\mathbb{R}^{\stateDim\times\controlDim}$, $C\in\mathbb{R}^{\outputDim\times\stateDim}$, and $D\in\mathbb{R}^{\outputDim\times\controlDim}$.
As commonly done in the control literature in analyzing the input-to-output mapping, we assume a zero initial condition, i.\,e. $\state(0) = 0$. Further, we assume that the system $\system$ is asymptotically stable, i.e., all the eigenvalues of $A$ have negative real parts.
Then, the input-to-output mapping is given by the convolution integral
\begin{equation}
	\label{eq:convolutionOperator}
	\outVar(t) = (\solOperator\control)(t) \vcentcolon= \int_0^t h(t-s)\control(s)\mathrm{d}s,
\end{equation}
where $h(t) \vcentcolon= C\exp(tA)B + D\delta(t)$ is the \emph{impulse response} of the system and $\delta$ denotes the Dirac impulse. 

Before we compute the Kolmogorov $n$-widths in this setting, we make the following observations. If $\inpSet$ is a subspace of $\inpSpace$, then $\solOperator(\inpSet)$ is a subspace of $\outSpace$, and hence the Kolmogorov $n$-widths are either zero or infinity, thus giving no valuable information. Therefore, we need to assume that the set of admissible inputs $\inpSet$ is bounded, which results in $\solOperator(\inpSet)$ being bounded. Moreover, Proposition~1.2 in \cite{Pin85} states that in this setting, $d_n(\solOperator(\inpSet))$ converges to zero if and only if the closure of $\solOperator(\inpSet)$ is compact, thus we need to assume that $\solOperator$ is compact. Unfortunately, the convolution operator in \eqref{eq:convolutionOperator} is not compact in general \cite{Ant05}. This issue is resolved by modifying the domain and co-domain of the convolution operator to obtain the so-called \emph{Hankel operator} $\hankelOperator$ that maps past inputs to future outputs. More precisely, we have
\begin{equation}
	\label{eq:HankelOperator}	
	\hankelOperator\colon\Lt(-\infty,0;\mathbb{R}^{\controlDim})\to\Lt(0,\infty;\mathbb{R}^{\outputDim}),\qquad (\hankelOperator \control)(t) = \int_{-\infty}^0 h(t-s)\control(s)\mathrm{d}s.
\end{equation}
The Hankel operator $\hankelOperator$ is a finite-rank operator of at most rank $N$, and thus in particular compact \cite{Ant05}. The singular values of $\mathcal{H}$ can be computed as the square roots of the eigenvalues of the products of the Gramians $\reachGramian\observGramian$, which solve the Lyapunov equations
\begin{equation}
	\label{eq:Lyapunov}
	A\reachGramian + \reachGramian A^T + BB^T = 0 \qquad\text{and}\qquad A^T\observGramian + \observGramian A + C^TC = 0. 
\end{equation}
The singular values of the Hankel operator, called the \emph{Hankel singular values},
 play a fundamental role in control theory, especially in model reduction, see \cite{Ant05}.
  We denote the $i$th Hankel singular value of the system $\system$ by $\sigma_i(\system)$ with the convention $\sigma_i(\system) \geq \sigma_{i+1}(\system)$ for $i=1,\ldots,N-1$. The \emph{Hankel-norm} $\|\cdot\|_H$ of $\system$ is the $\Lt-\Lt$ induced norm 
  of the Hankel operator $\hankelOperator$ and one can show (cf. \cite{Ant05}) that it equals the largest Hankel singular value:
  \begin{equation}
	\label{eq:HankelNorm}
	\|\system\|_H = \sigma_1(\system).
\end{equation}
Since the Hankel operator is compact, it possesses a singular value decomposition (SVD) of the form
\begin{displaymath}
	\hankelOperator\control = \sum_{i=1}^\stateDim \sigma_i(\system)\inpProd{\control}{\bff_i}\,\bfg_i.
\end{displaymath}
where the orthonormal sets $\{\bff_i\}$ and $\{\bfg_i\}$ can be computed explicitly in terms of the eigenfunctions of $\reachGramian\observGramian$, see \cite{Ant05} for further details. Our main result  establishes the connection between the SVD of the Hankel Operator, the Kolmogorov $n$-widths, and the greedy search.
\begin{theorem}
	\label{thm:KolmogorovHankelGreedy}
	Let $\system = (A,B,C,D)$ be an asymptotically stable dynamical system with the Hankel operator $\hankelOperator = \sum_{i=1}^\stateDim \sigma_i(\system)\inpProd{\cdot}{\bff_i}\bfg_i$, $\inpSpace \vcentcolon= \Lt(-\infty,0;\mathbb{R}^{\controlDim})$ with standard inner product $\inpProd{\cdot}{\cdot}$, and let
	\begin{displaymath}
		\inpSet\vcentcolon= \{\control\in\inpSpace \mid \inpNorm{\control}\leq 1\}
	\end{displaymath}
	be the unit ball in the input space $\inpSpace$. Then $(\bfg_1,\ldots,\bfg_\stateDim)$ is a greedy sequence and 
	\begin{equation}
		\label{eq:KolmogorovHankel}
		d_n(\hankelOperator(\inpSet)) = d(\spann\{\bfg_1,\ldots,\bfg_n\},\hankelOperator(\inpSet)) = \sigma_{n+1}(\system),
	\end{equation}
	for $n=1,\ldots,N$.
\end{theorem}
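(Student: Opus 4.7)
The plan is to exploit the SVD expansion $\hankelOperator\control = \sum_{i=1}^{\stateDim}\sigma_i(\system)\inpProd{\control}{\bff_i}\bfg_i$ to give an explicit description of $\hankelOperator(\inpSet)$ as a centered ellipsoid inside the finite-dimensional subspace $\spann\{\bfg_1,\ldots,\bfg_{\stateDim}\}$. From this parametrization the right-hand equality in \eqref{eq:KolmogorovHankel} follows by a direct calculation, and the matching lower bound $d_n(\hankelOperator(\inpSet)) \geq \sigma_{n+1}(\system)$ is proved by a dimension-counting argument in the spirit of the Eckart--Young theorem. The greedy statement is then an immediate consequence of these two bounds.

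For the upper bound I would fix $\control\in\inpSet$, decompose $\control = \sum_i c_i\bff_i + \control^\perp$ with $\control^\perp\in\kernel(\hankelOperator)$, and use orthonormality of the $\bff_i$ to obtain $\sum_i c_i^2 \leq \inpNorm{\control}^2 \leq 1$. Orthogonal projection of $\hankelOperator\control = \sum_i \sigma_i(\system) c_i \bfg_i$ onto $\spann\{\bfg_1,\ldots,\bfg_n\}$ leaves the residual $\sum_{i>n}\sigma_i(\system)c_i\bfg_i$, whose squared norm is at most $\sigma_{n+1}(\system)^2\sum_{i>n}c_i^2\leq \sigma_{n+1}(\system)^2$, with equality attained at $\control = \bff_{n+1}$. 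This simultaneously proves $d(\spann\{\bfg_1,\ldots,\bfg_n\},\hankelOperator(\inpSet)) = \sigma_{n+1}(\system)$ and the inequality $d_n(\hankelOperator(\inpSet)) \leq \sigma_{n+1}(\system)$.

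For the lower bound, fix any $\outSpaceRed\leq\outSpace$ with $\dim\outSpaceRed\leq n$ and set $V \vcentcolon= \spann\{\bfg_1,\ldots,\bfg_{n+1}\}$. The restriction $\pi_{\outSpaceRed}\big|_V$ has rank at most $n$, so its kernel contains a nonzero $\bfg = \sum_{i=1}^{n+1}\beta_i\bfg_i$; then $\bfg\perp\outSpaceRed$. Assuming $\sigma_{n+1}(\system)>0$ (otherwise the bound is trivial), define $\control\vcentcolon= \sum_{i=1}^{n+1}(\beta_i/\sigma_i(\system))\bff_i$ and rescale $\bfg$ so that $\inpNorm{\control}=1$; then $\bfg = \hankelOperator\control\in\hankelOperator(\inpSet)$, and the monotonicity $\sigma_i(\system)\geq\sigma_{n+1}(\system)$ for $i\leq n+1$ yields
\begin{displaymath}
\outNorm{\bfg}^2 = \sum_{i=1}^{n+1}\beta_i^2 \geq \sigma_{n+1}(\system)^2\sum_{i=1}^{n+1}\frac{\beta_i^2}{\sigma_i(\system)^2} = \sigma_{n+1}(\system)^2\inpNorm{\control}^2 = \sigma_{n+1}(\system)^2.
\end{displaymath}
Hence $\mathrm{dist}(\bfg,\outSpaceRed) = \outNorm{\bfg}\geq\sigma_{n+1}(\system)$, giving $d(\outSpaceRed,\hankelOperator(\inpSet))\geq \sigma_{n+1}(\system)$ and therefore $d_n(\hankelOperator(\inpSet))\geq\sigma_{n+1}(\system)$. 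The greedy claim follows because for any subspace $W\leq\outSpace$ the Hilbert projection identity gives $e(\hankelOperator,W,\inpSet) = d(W,\hankelOperator(\inpSet))$, so the infimum in \eqref{eq:greedySequence} — even restricted to $i$-dimensional subspaces containing $\spann\{\bfg_1,\ldots,\bfg_{i-1}\}$ — cannot fall below $\sigma_{i+1}(\system)$, and is already attained by $\boldsymbol{\phi}_i = \bfg_i$.

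The main obstacle is the lower bound: the vector $\bfg$ produced by the dimension count lives in $V$ but its direction is dictated by $\outSpaceRed$, so I have no a priori control over the coefficients $\beta_i$. The key estimate $\outNorm{\bfg}^2\geq\sigma_{n+1}(\system)^2\inpNorm{\control}^2$ works precisely because the expansion is truncated at $i = n+1$, where $\sigma_i(\system)\geq\sigma_{n+1}(\system)$; using the full SVD instead would produce directions compatible with any $\outSpaceRed$ but yield only $\sigma_{\stateDim}(\system)$ as a bound. Pinning down the right $(n+1)$-dimensional test subspace is what makes the inequality tight and closes the two-sided estimate.
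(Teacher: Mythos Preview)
Your proof is correct and follows essentially the same Eckart--Young strategy as the paper: an upper bound via the explicit residual after projecting onto $\spann\{\bfg_1,\ldots,\bfg_n\}$, and a lower bound via a dimension count that produces a test vector annihilated by $\pi_{\outSpaceRed}$. The only cosmetic difference is where the dimension count is carried out: the paper intersects $\ker(\pi_{\outSpaceRed}\hankelOperator)$ with $\spann\{\bff_1,\ldots,\bff_{n+1}\}$ in the \emph{input} space and then pushes forward, whereas you intersect $\outSpaceRed^\perp$ with $\spann\{\bfg_1,\ldots,\bfg_{n+1}\}$ in the \emph{output} space and pull back via $\control=\sum_i(\beta_i/\sigma_i)\bff_i$; under the change of variables $c_i=\beta_i/\sigma_i$ the two estimates are literally identical, and your explicit handling of the degenerate case $\sigma_{n+1}(\system)=0$ and of the greedy step is, if anything, slightly cleaner than the paper's.
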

\begin{proof}
	Let $\outSpaceRed$ denote an $n$-dimensional subspace of $\outSpace$. Since $L^2(0,\infty,\mathbb{R}^{\outputDim})$ is a Hilbert space, we have for $\outVar\in\outSpace$
	\begin{displaymath}
		\inf_{\redOutVar \in \outSpaceRed} \outNorm{\outVar-\redOutVar} = \outNorm{\outVar - \pi_{\outSpaceRed}\outVar},
	\end{displaymath}		
	where $\pi_{\outSpaceRed}$ is the orthogonal projection onto $\outSpaceRed$. Thus
	\begin{displaymath}
		d_n(\hankelOperator(\inpSet)) = \inf_{\substack{\outSpaceRed\leq \outSpace\\\dim(\outSpaceRed)=n}} \sup_{\outVar\in\solOperator(\inpSet)} \outNorm{\outVar-\pi_{\outSpaceRed}\outVar} =  \inf_{\substack{\outSpaceRed\leq \outSpace\\\dim(\outSpaceRed)=n}} e(\hankelOperator,\outSpaceRed,\inpSet),
	\end{displaymath}
	which shows (since $\hankelOperator$ is linear) that the minimizing subspace for the Kolmogorov $n$-width and the subspace generated by the greedy sequence in \eqref{eq:greedySequence} coincide. 
	
	It remains to show that $d_n(\hankelOperator(\inpSet)) = \sigma_{n+1}$. For general compact operators a proof of this fact is given in \cite{Pin85}, which invokes a primal-dual approach and the Courant-Fischer-Weyl min-max principle for self-adjoint compact operators \cite{ReeS78}. 	Here, to explicitly highlight how the input and output spaces appear and to be able to use it later in proving \Cref{thm:activeSubspace}, we give a version, which follows the Schmidt-Eckart-Young-Mirsky theorem \cite{Ant05} for optimal low-rank approximation in the finite dimensional case.
	
	Recall that in out case the Hankel operator is a finite rank operator. Let $\mathcal{F} \vcentcolon= \spann\{\bff_1,\ldots,\bff_N\}$ and $\mathcal{F}^\perp$ be its orthogonal complement such that $\inpSpace = \mathcal{F}\oplus\mathcal{F}^\perp$. Notice that we have $\hankelOperator(\mathcal{F}^\perp) = \{0\}$. Then $\dim(\pi_{\outSpaceRed}\hankelOperator(\inpSet)) \leq n$ and  there exists 
	\begin{equation}
		\label{eq:nWidthLowerBoundInput}
		\inpVar\in\ker(\pi_{\outSpaceRed}\hankelOperator)\cap \spann\{\bff_1,\ldots,\bff_{n+1}\}
	\end{equation}
	with $\inpNorm{\inpVar} = 1$; thus $\inpVar\in\inpSet$. We then obtain
	\begin{align*}
		e(\hankelOperator,\outSpaceRed,\inpSet)^2 &\geq \outNorm{(\hankelOperator-\pi_{\outSpaceRed}\circ\hankelOperator)\inpVar}^2 = \outNorm{\hankelOperator\inpVar}^2 = \outNorm{\sum_{i=1}^{n+1} \sigma_i(\system) \inpProd{\inpVar}{\bff_i}\bfg_i}^2\\
		&\geq \sigma_{n+1}(\system)^2\outNorm{\sum_{i=1}^{n+1} \inpProd{\inpVar}{\bff_i} \bfg_i}^2 = \sigma_{n+1}(\system)^2 \sum_{i=1}^{n+1} |\inpProd{\inpVar}{\bff_i}|^2\\
		&= \sigma_{n+1}(\system)^2,
	\end{align*}
	yielding $d_n(\hankelOperator(\inpSet))\geq \sigma_{n+1}(\system)$. On the other hand, the choice $\outSpaceRed = \spann\{\bfg_1,\ldots,\bfg_n\}$ yields 
	\begin{displaymath}
		\pi_{\outSpaceRed}\hankelOperator = \sum_{i=1}^n \sigma_i(\system)\inpProd{\cdot}{\bff_i}\bfg_i
	\end{displaymath}
	and hence 
	\begin{align*}
		e(\hankelOperator,\spann\{\bfg_1,\ldots,\bfg_n\},\inpSet)^2 &= \sup_{\inpVar\in\inpSet} \outNorm{(\hankelOperator - \pi_{\outSpaceRed}\circ\hankelOperator)\inpVar}^2 = \sup_{\inpVar\in\inpSet} \outNorm{\sum_{i=n+1}^\stateDim \sigma_i(\system)\inpProd{\inpVar}{\bff_i}\bfg_i}^2\\
		&= \sup_{\inpVar\in\inpSet} \sum_{i=n+1}^\stateDim \sigma_i(\system)^2 |\inpProd{\inpVar}{\bff_i}|^2.
	\end{align*}
	Using
	\begin{align*}
		\sum_{i=n+1}^\stateDim \sigma_i(\system)^2 |\inpProd{\inpVar}{\bff_i}|^2 &\leq \sigma_{n+1}(\system)^2 \sup_{\inpVar\in\inpSet} \sum_{i=n+1}^\stateDim |\inpProd{\inpVar}{\bff_i}|^2 \leq \sigma_{n+1}(\system)^2
	\end{align*}
	with equality for $\inpVar = \bff_{n+1}$, we obtain $e(\hankelOperator,\spann\{\bfg_1,\ldots,\bfg_n\},\inpSet) = \sigma_{n+1}$. Thus we have
	\begin{displaymath}
		d_n(\hankelOperator(\inpSet)) = \sigma_{n+1}(\system),
	\end{displaymath}
	which completes the proof.
\end{proof}
\begin{remark}
	\label{rem:wrongStatement}
	\Cref{thm:KolmogorovHankelGreedy} seems to contradict a result from \cite{Djo10}, where the author claims (cf. \cite[Theorem~1]{Djo10}) that 
	\begin{equation}
		\label{eq:KolHankel}
		d_n(\hankelOperator(\Lt(-\infty,0;\mathbb{R}^{\controlDim}))) = \sigma_{n+1}(\system)
	\end{equation}
	for an asymptotically stable LTI system $\system$. However, since the Hankel operator $\hankelOperator$ is linear, the set $\hankelOperator(\Lt(-\infty,0;\mathbb{R}^{\controlDim}))$ is a linear subspace and thus the Kolmogorov $n$-widths are either infinity or zero, which shows that \eqref{eq:KolHankel} cannot be true.
\end{remark}

Note that even with the knowledge of the minimizing subspace $\outSpaceRedMin\vcentcolon=\spann\{\bfg_1,\ldots,\bfg_n\}$, it is a nontrivial  task to determine a state-space representation $\system_n = (A_n,B_n,C_n,D_n)$ of $\pi_{\outSpaceRedMin}\hankelOperator$. This  issue was resolved by Glover \cite{Glo84} who developed a computational procedure for constructing $\pi_{\outSpaceRedMin}\hankelOperator$. Together with the Adamjan-Arov-Krein theorem \cite{AdaAK71}, this implies the following result.
\begin{corollary}
	\label{cor:HankelNormApprox}
	Let the assumptions and definitions be as in \Cref{thm:KolmogorovHankelGreedy}. Then 
	\begin{equation}
		\label{eq:KolmogorovOptimalHankelNorm}
		d_n(\hankelOperator(\inpSet)) = \inf_{\substack{\tilde{\system}\,\mathrm{asym.\,stable}\\\dim(\tilde{\system})\leq n}} \|\system - \tilde{\system}\|_H.
	\end{equation}
\end{corollary}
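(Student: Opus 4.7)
The plan is to chain \Cref{thm:KolmogorovHankelGreedy} with the Adamjan-Arov-Krein (AAK) theorem and Glover's explicit realization procedure mentioned in the paragraph preceding the corollary. Since \Cref{thm:KolmogorovHankelGreedy} already gives $d_n(\hankelOperator(\inpSet)) = \sigma_{n+1}(\system)$, it suffices to verify that the right-hand side of \eqref{eq:KolmogorovOptimalHankelNorm} is also equal to $\sigma_{n+1}(\system)$, and this reduces to proving two matching inequalities.

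For the lower bound, I would observe that for any asymptotically stable $\tilde{\system}$ with $\dim(\tilde{\system}) \leq n$, the associated Hankel operator $\tilde{\hankelOperator}$ has rank at most $n$. Because the impulse response of $\system - \tilde{\system}$ is the difference of the impulse responses of $\system$ and $\tilde{\system}$ and the map from impulse response to Hankel operator is linear, one has $\|\system - \tilde{\system}\|_H = \|\hankelOperator - \tilde{\hankelOperator}\|_{\mathrm{op}}$. The Schmidt-Eckart-Young-Mirsky bound for compact operators---essentially the same kernel-intersection argument already used in the proof of \Cref{thm:KolmogorovHankelGreedy} around \eqref{eq:nWidthLowerBoundInput}, applied to $\hankelOperator - \tilde{\hankelOperator}$ instead of $\pi_{\outSpaceRed}\hankelOperator$---then yields $\|\hankelOperator - \tilde{\hankelOperator}\|_{\mathrm{op}} \geq \sigma_{n+1}(\system)$.

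For the matching upper bound the naive candidate, namely truncating the Hankel SVD of $\hankelOperator$ to $\outSpaceRedMin = \spann\{\bfg_1,\ldots,\bfg_n\}$, need not correspond to the Hankel operator of an asymptotically stable LTI system, so additional machinery is required. At this point I would invoke the AAK theorem to guarantee the existence of an asymptotically stable $\tilde{\system}^\star$ of dimension at most $n$ satisfying $\|\system - \tilde{\system}^\star\|_H = \sigma_{n+1}(\system)$, and cite Glover's procedure to produce an explicit state-space realization $(A_n, B_n, C_n, D_n)$ attaining the infimum. Combining the two inequalities with \Cref{thm:KolmogorovHankelGreedy} then delivers \eqref{eq:KolmogorovOptimalHankelNorm}.

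The main obstacle is the upper bound: constructing a stable reduced-order system whose Hankel-norm error matches $\sigma_{n+1}(\system)$ is genuinely nontrivial and is precisely the content of AAK together with Glover's realization. The rest of the argument is bookkeeping around \Cref{thm:KolmogorovHankelGreedy} and the linearity of the impulse-response-to-Hankel-operator correspondence.
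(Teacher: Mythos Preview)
Your proposal is correct and matches the paper's approach: the paper does not give a formal proof of the corollary at all but simply states (in the paragraph preceding it) that Glover's construction together with the Adamjan--Arov--Krein theorem yields the result, which is exactly the chain you describe. Your version is a more explicit unpacking of that sentence, separating the Schmidt--Eckart--Young--Mirsky lower bound from the AAK/Glover upper bound, but the substance is identical.
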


\section{Connection to Active Subspaces}
Instead of looking at the best approximation of the image of $\solOperator$ in terms of a linear subspace, we can also ask for the best approximation of the input space of $\solOperator$ in terms of a low-dimensional subspace, which leads to the notion of the so called \emph{active subspace} (see \cite{Con15} and the references therein). In other words, the active subspace describes the important directions in the input space $\inpSpace$. More precisely, we call an $n$-dimensional subspace $\inpSpaceRedMin\leq\inpSpace$ an \emph{active subspace} if it satisfies
\begin{equation}
	\label{eq:activeSubspace}
		\activeSubspaceDistance{n}(\mathcal{S},\inpSet) \vcentcolon= \sup_{\inpVar\in\inpSet} \outNorm{\solOperator(\inpVar) - \solOperator(\pi_{\inpSpaceRedMin}\inpVar)} = \inf_{\substack{\inpSpaceRed\leq \inpSpace\\\dim(\inpSpaceRed)\leq n}} \sup_{\inpVar\in\inpSet} \outNorm{\solOperator(\inpVar) - \solOperator(\pi_{\inpSpaceRed}\inpVar)},
\end{equation}
where $\smash{\pi_{\inpSpaceRed}}$ denotes the orthogonal projection onto $\inpSpaceRed$.
Having identified an active subspace means that the computational cost of a parameter study in $\inpSet$ can be reduced by performing the parameter study in $\smash{\pi_{\inpSpaceRedMin}}\inpSet$. Note that similar to the minimizing subspace for Kolmogorov $n$-widths, in practial applications the minimization problem in \eqref{eq:activeSubspace} is not resolved exactly but only approximately \cite{Con15}. As for the Kolmogorov $n$-widths, the active subspace for the Hankel operator restricted to the unit ball can be computed exactly, as the following result shows.
\begin{theorem}
	\label{thm:activeSubspace}
	Let the assumptions and definitions be as in \Cref{thm:KolmogorovHankelGreedy}. Then the $n$-dimensional active subspace is given by $\inpSpaceRedMin = \spann\{\bff_1,\ldots,\bff_n\}$ with worst-case approximation error $\activeSubspaceDistance{n}(\hankelOperator,\inpSet) = \sigma_{n+1}(\system)$.
\end{theorem}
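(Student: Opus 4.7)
The plan is to mirror the two-sided argument used in the proof of \Cref{thm:KolmogorovHankelGreedy}, exploiting the SVD $\hankelOperator = \sum_{i=1}^N \sigma_i(\system)\inpProd{\cdot}{\bff_i}\bfg_i$ and the fact that $\{\bff_i\}$ is orthonormal in $\inpSpace$, so that $\pi_{\inpSpaceRed}$ acts transparently on linear combinations of the $\bff_i$. Since $\hankelOperator$ is linear, I can rewrite $\solOperator(\inpVar) - \solOperator(\pi_{\inpSpaceRed}\inpVar) = \hankelOperator(\inpVar - \pi_{\inpSpaceRed}\inpVar)$, which turns \eqref{eq:activeSubspace} into the operator-norm problem of minimizing $\|\hankelOperator(I-\pi_{\inpSpaceRed})\|$ over $n$-dimensional $\inpSpaceRed\leq\inpSpace$.

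First I would establish the upper bound by plugging in $\inpSpaceRed = \spann\{\bff_1,\ldots,\bff_n\}$. Then $\inpVar - \pi_{\inpSpaceRed}\inpVar = \sum_{i>n}\inpProd{\inpVar}{\bff_i}\bff_i$, so
\begin{displaymath}
  \outNorm{\hankelOperator(\inpVar-\pi_{\inpSpaceRed}\inpVar)}^2 = \sum_{i=n+1}^{N}\sigma_i(\system)^2|\inpProd{\inpVar}{\bff_i}|^2 \leq \sigma_{n+1}(\system)^2\inpNorm{\inpVar}^2\leq \sigma_{n+1}(\system)^2,
\end{displaymath}
with equality attained at $\inpVar = \bff_{n+1}\in\inpSet$. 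Hence $\activeSubspaceDistance{n}(\hankelOperator,\inpSet)\leq \sigma_{n+1}(\system)$.

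For the lower bound I would, for any $n$-dimensional $\inpSpaceRed\leq\inpSpace$, perform the same dimension count used to obtain \eqref{eq:nWidthLowerBoundInput}: the restriction of $\pi_{\inpSpaceRed}$ to the $(n+1)$-dimensional subspace $\spann\{\bff_1,\ldots,\bff_{n+1}\}$ is a linear map into an $n$-dimensional space, so it has a nontrivial kernel, yielding $\inpVar\in\spann\{\bff_1,\ldots,\bff_{n+1}\}\cap \ker(\pi_{\inpSpaceRed})$ with $\inpNorm{\inpVar}=1$. For this $\inpVar$ we have $\inpVar-\pi_{\inpSpaceRed}\inpVar = \inpVar$, and expanding $\hankelOperator\inpVar$ in the SVD gives
\begin{displaymath}
  \outNorm{\hankelOperator\inpVar}^2 = \sum_{i=1}^{n+1}\sigma_i(\system)^2|\inpProd{\inpVar}{\bff_i}|^2 \geq \sigma_{n+1}(\system)^2\sum_{i=1}^{n+1}|\inpProd{\inpVar}{\bff_i}|^2 = \sigma_{n+1}(\system)^2.
\end{displaymath}
Taking the infimum over $\inpSpaceRed$ yields $\activeSubspaceDistance{n}(\hankelOperator,\inpSet)\geq \sigma_{n+1}(\system)$, and the two bounds together finish the proof.

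I do not expect any serious obstacle: the argument is essentially the input-side dual of the proof of \Cref{thm:KolmogorovHankelGreedy}. The only subtlety worth flagging is the use of $\hankelOperator$'s linearity to collapse $\solOperator(\inpVar)-\solOperator(\pi_{\inpSpaceRed}\inpVar)$ into $\hankelOperator$ applied to $(I-\pi_{\inpSpaceRed})\inpVar$, which is what makes the symmetry with the $n$-width argument work and which would fail in a genuinely nonlinear setting.
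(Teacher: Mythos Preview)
Your proposal is correct and follows essentially the same approach as the paper's proof: the paper gives only a brief sketch, but it uses exactly your dimension-counting argument to find a unit vector $\inpVar\in\inpSpaceRed^\perp\cap\spann\{\bff_1,\ldots,\bff_{n+1}\}$ (note $\ker(\pi_{\inpSpaceRed})=\inpSpaceRed^\perp$) for the lower bound, and the same choice $\inpSpaceRed=\spann\{\bff_1,\ldots,\bff_n\}$ with equality at $\inpVar=\bff_{n+1}$ for the upper bound. One cosmetic point: in your upper-bound step the identity $\inpVar-\pi_{\inpSpaceRed}\inpVar=\sum_{i>n}\inpProd{\inpVar}{\bff_i}\bff_i$ is only true modulo the component of $\inpVar$ in $\mathcal{F}^\perp$, but since $\hankelOperator$ annihilates $\mathcal{F}^\perp$ your subsequent norm computation is unaffected.
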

\begin{proof}
As can be expected, the proof follows similarly to that of  \Cref{thm:KolmogorovHankelGreedy}; thus we only give a brief sketch. Let $\inpSpaceRed$ be an $n$-dimensional subspace of $\inpSpace$ and let $\inpSpaceRed^\perp$ denote its orthogonal complement. Then there exists $\inpVar\in\inpSpaceRed^\perp \cap \spann\{\bff_1,\ldots,\bff_{n+1}\}$ with $\|\inpVar\|=1$, and thus, $\smash{\outNormSmashed{\hankelOperator(\inpVar - \pi_{\inpSpaceRed}\inpVar)}^2}  \geq \sigma_{n+1}(\system)^2$,
	which shows $\activeSubspaceDistance{n}(\mathcal{\hankelOperator},\inpSet) \geq \sigma_{n+1}(\system)$. Conversely, the choice $\inpSpaceRed = \spann\{\bff_1,\ldots,\bff_n\}$ yields
$		\smash{\outNormSmashed{\hankelOperator(\inpVar-\pi_{\inpSpaceRed}\inpVar)}^2}  \leq \sigma_{n+1}(\system)^2$
	with equality for $\inpVar = \bff_{n+1}$. Thus the active subspace is given by $\inpSpaceRedMin = \spann\{\bff_1,\ldots,\bff_n\}$ with approximation error $\activeSubspaceDistance{n}(\hankelOperator,\inpSet) = \sigma_{n+1}(\system)^2$.
\end{proof}
\begin{remark}
	\label{rem:activeSubspace}
	Since the vectors $\bff_i$'s can be computed as the minimizing subspace for \emph{the adjoint system} $\system^* = (-A^*,-C^*,B^*,D^*)$, we can interpret the active subspace as the dual concept of the minimizing subspace for the Kolmogorov $n$-width. Hence, the Hankel operator and the greedy selection procedure can be seen as the linking theory between Kolmogorov $n$-widths and active subspaces.
\end{remark}

\section{Parametric LTI systems}
As expected, the analysis for the pLTI case  in \eqref{eq:pLTI} is more involved than for the LTI case. Let $\controlSpace \vcentcolon= \Lt((-\infty,0],\mathbb{R}^{\controlDim})$ and assume that the control variable $\control$ is an element of the unit ball $\controlSet \vcentcolon=\{\control\in\controlSpace\mid \controlNorm{u}\leq 1\}$ and that the parameter $\param$ varies in a compact parameter set $\paramSet\subseteq\paramSpace$. In particular we set $\inpSet = \paramSet\times\controlSet$ with norm
\begin{displaymath}
	\inpNorm{(\param,\control)} = \paramNorm{\param} + \controlNorm{\control}.
\end{displaymath}
For each $\param\in\paramSet$, the Hankel operator $\mathcal{H}(\param)$ is given by
\begin{displaymath}
	(\hankelOperator(\param,\control))(t) = \int_{-\infty}^0 h(\param,t-s)\control(s)\mathrm{d}s \quad\text{with}\quad 
	h(\param,t) = C(\param)\exp(tA(\param))B(\param) + \delta(t)D(\param).
\end{displaymath}
We are interested in the Kolmogorov $n$-width $d_n(\hankelOperator(\paramSet,\controlSet))$. Since for the constant parameter case we needed to assume that the system is asymptotically stable, we assume that \eqref{eq:pLTI} is  asymptotically stable for each $\param\in\paramSet$. This set-up leads to our final result.
\begin{theorem}
	\label{thm:KolmogorovParametric}
	Let $\outSpace = \Lt((-\infty,0],\mathbb{R}^{\outputDim})$, $\controlSpace = \Lt([0,\infty),\mathbb{R}^{\controlDim})$, and consider the asymptotically stable parametric LTI system $\system(\param)$ as in \eqref{eq:pLTI} with the Hankel operator $\hankelOperator$. Assume that $\paramSet$ is compact, the Hankel singular values $\sigma_i(\system(\param))$ are continuous on $\paramSet$, and set $\controlSet\vcentcolon= \{\control\in\controlSpace \mid \controlNorm{\control}\leq 1\}$. Then
	\begin{equation}
		\label{eq:mainResult}
		d_n(\hankelOperator(\paramSet,\controlSet)) \geq \max_{\param\in\paramSet} \sigma_{n+1}(\system(\param)).
	\end{equation}
\end{theorem}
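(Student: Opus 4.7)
The plan is to exploit the monotonicity of the Kolmogorov $n$-width with respect to set inclusion and then reduce to the non-parametric case already treated in \Cref{thm:KolmogorovHankelGreedy}. For any fixed $\param_*\in\paramSet$, the ``slice'' $\{\param_*\}\times\controlSet$ sits inside $\paramSet\times\controlSet$, so its image under $\hankelOperator$ is contained in $\hankelOperator(\paramSet,\controlSet)$. Since $d_n$ is monotone in its argument (infimizing the sup of a distance over a larger set can only increase the value), we obtain
\begin{equation*}
    d_n(\hankelOperator(\paramSet,\controlSet)) \geq d_n(\hankelOperator(\{\param_*\},\controlSet))
\end{equation*}
for every $\param_*\in\paramSet$.

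Next, I would observe that for the fixed parameter $\param_*$, the system $\system(\param_*)$ is asymptotically stable by assumption, its Hankel operator is precisely $\hankelOperator(\param_*,\cdot)$, and the admissible input set is exactly the unit ball $\controlSet$ of $\controlSpace$. Thus \Cref{thm:KolmogorovHankelGreedy} applies verbatim and yields
\begin{equation*}
    d_n(\hankelOperator(\{\param_*\},\controlSet)) = \sigma_{n+1}(\system(\param_*)).
\end{equation*}
Combining the two displays gives the bound $d_n(\hankelOperator(\paramSet,\controlSet)) \geq \sigma_{n+1}(\system(\param_*))$ for every $\param_*\in\paramSet$.

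Finally, I would take the supremum over $\param_*\in\paramSet$. Because $\paramSet$ is assumed compact and the map $\param\mapsto \sigma_{n+1}(\system(\param))$ is assumed continuous, this supremum is attained, so it can be written as a maximum, yielding \eqref{eq:mainResult}.

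The argument is essentially a one-line monotonicity reduction, so there is no real technical obstacle; the only things to check carefully are that (i) the inner product / norm on the ambient input space $\inpSpace=\paramSpace\times\controlSpace$ does not affect the slice computation, since on $\{\param_*\}\times\controlSet$ the relevant norm reduces to $\controlNorm{\cdot}$, and (ii) that the hypotheses of \Cref{thm:KolmogorovHankelGreedy} (asymptotic stability and unit-ball input set) are met parameter-wise, both of which are explicit assumptions of the theorem. The continuity/compactness hypothesis on $\sigma_{n+1}(\system(\cdot))$ is only used at the very end to replace $\sup$ by $\max$.
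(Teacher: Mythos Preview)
Your proof is correct, and it is genuinely simpler than the paper's own argument. The paper proceeds by a dual reformulation: it rewrites $\inf_{\redOutVar\in\outSpaceRed}\outNorm{\outVar-\redOutVar}$ as $\sup_{\boldsymbol{x}\in\outSpaceRed^\perp\setminus\{0\}}\outProd{\outVar}{\boldsymbol{x}}/\outNorm{\boldsymbol{x}}$, pushes the sup over $\control$ through using the adjoint $\hankelOperator(\param)^*$, then exchanges the outer $\inf_{\outSpaceRed}$ with $\sup_{\param\in\paramSet}$ (the usual max--min $\leq$ min--max inequality), and finally invokes the Courant--Fischer--Weyl min--max principle for the self-adjoint compact operator $\hankelOperator(\param)\hankelOperator(\param)^*$ to identify $\sigma_{n+1}(\system(\param))$. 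In contrast, you bypass the dual reformulation entirely: monotonicity of $d_n$ under set inclusion does the work of the inf--sup swap in one stroke, and \Cref{thm:KolmogorovHankelGreedy} replaces the separate appeal to Courant--Fischer. Your route is more economical because it reuses the non-parametric result already established, whereas the paper's proof is largely self-contained and in effect re-derives the fixed-$\param$ identity via a different variational characterization. Both arguments hinge on the same structural observation (freezing $\param$ gives a lower bound), but yours packages it more cleanly. Your remark that the input-space norm on $\paramSpace\times\controlSpace$ is irrelevant is well taken: $d_n$ depends only on the image set in $\outSpace$, so only the inclusion $\hankelOperator(\param_*,\controlSet)\subseteq\hankelOperator(\paramSet,\controlSet)$ matters.
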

\begin{proof}
	\allowdisplaybreaks
	Let $\outSpaceRed$ denote a subspace of $\outSpace$ with dimension $n\in\mathbb{N}$ and $\outSpaceRed^\perp$ its orthogonal complement (with respect to the standard inner product in $\outSpace$ -- which we denote with $\outProd{\cdot}{\cdot}$). Fix $\param\in\paramSet$ and consider $\outVar\in\hankelOperator(\param,\inpSet)$, $\redOutVar\in\outSpaceRed$ and $\boldsymbol{x}\in\outSpaceRed^\perp$. Then we obtain
	\begin{align*}
		\outNorm{\outVar-\redOutVar}^2 &= \outNorm{\pi_{\outSpaceRed}\outVar-\redOutVar + \left(\id_\outSpace-\pi_{\outSpaceRed}\right)\outVar}^2 \geq \outNorm{\left(\mathrm{Id}_{\outSpace}-\pi_{\outSpaceRed}\right)\outVar}^2,\\
		\outProd{\outVar}{\boldsymbol{x}} &= \outProd{\left(\mathrm{Id}_{\outSpace}-\pi_{\outSpaceRed}\right)\outVar}{\boldsymbol{x}} \leq \outNorm{\left(\mathrm{Id}_{\outSpace}-\pi_{\outSpaceRed}\right)\outVar} \outNorm{\boldsymbol{x}},
	\end{align*}
	and hence
	\begin{displaymath}
		\inf_{\redOutVar\in\outSpaceRed} \outNorm{\outVar-\redOutVar} = \sup_{\boldsymbol{x}\in\outSpaceRed^\perp\setminus\{0\}} \frac{\outProd{\outVar}{\boldsymbol{x}}}{\outNorm{\boldsymbol{x}}}.
	\end{displaymath}
	Thus we can reformulate the Kolmogorov $n$-width as
	\begin{align*}
		d_n(\hankelOperator(\paramSet,\inpSet)) &= \inf_{\substack{\outSpaceRed\leq \outSpace\\\dim(\outSpaceRed)=n}}\quad \sup_{\outVar\in\hankelOperator(\paramSet,\inpSet)}\quad \inf_{\redOutVar \in \outSpaceRed} \outNorm{\outVar-\redOutVar}\\
		&= \inf_{\substack{\outSpaceRed\leq \outSpace\\\dim(\outSpaceRed)=n}}\quad \sup_{\boldsymbol{x}\in\outSpaceRed^\perp\setminus\{0\}}\quad \sup_{\outVar\in\hankelOperator(\paramSet,\inpSet)} \frac{\outProd{\outVar}{\boldsymbol{x}}}{\outNorm{\boldsymbol{x}}}\\
		&= \inf_{\substack{\outSpaceRed\leq \outSpace\\\dim(\outSpaceRed)=n}}\quad \sup_{\boldsymbol{x}\in\outSpaceRed^\perp\setminus\{0\}}\quad \sup_{(\param,\control)\in\paramSet\times\inpSet} \frac{\outProd{\hankelOperator(\param)\control}{\boldsymbol{x}}}{\outNorm{\boldsymbol{x}}}\\
		&= \inf_{\substack{\outSpaceRed\leq \outSpace\\\dim(\outSpaceRed)=n}}\quad \sup_{\boldsymbol{x}\in\outSpaceRed^\perp\setminus\{0\}}\quad \sup_{(\param,\control)\in\paramSet\times\inpSet} \frac{\inpProd{\control}{\hankelOperator(\param)^*\boldsymbol{x}}}{\outNorm{\boldsymbol{x}}},
	\end{align*}
	where $\hankelOperator(\param)^*$ denotes the adjoint operator of $\hankelOperator(\param)$. Due to the definition of $\inpSet$ we have
	\begin{displaymath}
		\sup_{\control\in\inpSet} \frac{\inpProd{\control}{\hankelOperator(\param)^*\boldsymbol{x}}}{\outNorm{\boldsymbol{x}}} = \frac{\inpNorm{\hankelOperator(\param)^*\boldsymbol{x}}}{\outNorm{\boldsymbol{x}}} = \frac{\sqrt{\outProd{\hankelOperator(\param)\hankelOperator(\param)^*\boldsymbol{x}}{\boldsymbol{x}}}}{\outNorm{\boldsymbol{x}}}
	\end{displaymath}
	and thus
	\begin{align*}
		d_n(\hankelOperator(\paramSet,\inpSet)) &= \inf_{\substack{\outSpaceRed\leq \outSpace\\\dim(\outSpaceRed)=n}}\quad \sup_{\param\in\paramSet}\quad \sup_{\boldsymbol{x}\in\outSpaceRed^\perp\setminus\{0\}} \frac{\sqrt{\outProd{\hankelOperator(\param)\hankelOperator(\param)^*\boldsymbol{x}}{\boldsymbol{x}}}}{\outNorm{\boldsymbol{x}}}\\
		&\geq \sup_{\param\in\paramSet}\quad \inf_{\substack{\outSpaceRed\leq \outSpace\\\dim(\outSpaceRed)=n}}\quad  \sup_{\boldsymbol{x}\in\outSpaceRed^\perp\setminus\{0\}} \frac{\sqrt{\outProd{\hankelOperator(\param)\hankelOperator(\param)^*\boldsymbol{x}}{\boldsymbol{x}}}}{\outNorm{\boldsymbol{x}}}.
	\end{align*}
	Since $\hankelOperator(\param)$ is compact for every $\param$, so is $\hankelOperator(\param)\hankelOperator(\param)^*$ and from the Courant-Fischer-Weyl min-max principle for self-adjoint compact operators (cf. \cite{ReeS78}), we obtain
	\begin{displaymath}
		d_n(\hankelOperator(\param,\inpSet)) \geq \sqrt{\lambda_{n+1}(\hankelOperator(\param)\hankelOperator(\param)^*)} = \sigma_{n+1}\left(\system(\param)\right).
	\end{displaymath}
\end{proof}
\begin{remark}
	The continuity of the Hankel singular values with respect to the parameter $\param$ can be guaranteed if we assume that $A,B,C$, and $D$ are holomorphic on the logarithmically convex Reinhardt domain $\paramSet$, see \cite{WitTKAS16} for further details. \Cref{thm:KolmogorovParametric} reveals that for parametric LTI dynamical systems if  only a  \emph{non-parametric} approximation basis, i.e., a global basis,  is employed, one can only obtain a lower bound for the Kolmogorov $n$-width. In general, to achieve this lower bound
one will need to use a parametrically varying basis. \end{remark}

\section{Summary and Outlook}
In this paper, we have illustrated a direct connection between the Hankel singular values and the Kolmogorov $n$-widths for LTI systems. For parametric LTI systems, the same analysis has lead to a lower bound for the Kolmorogov $n$-width. We also showed that the method of active subspaces can be considered as the dual concept to the minimizing subspace for the Kolmogorov $n$-width. Extensions of these results to more general cases such as nonlinear dynamical systems will be of interest. Also, it will be interesting to investigate further if this connection can lead to an approximate, but numerically more feasible, implementation of optimal Hankel norm approximation.

\bibliographystyle{plain}
\bibliography{HankelOperator}

\end{document}